\documentclass[runningheads]{llncs}

\usepackage[T1]{fontenc}
\usepackage[utf8]{inputenc}
\usepackage{amsmath,amssymb}
\usepackage{bm}
\usepackage{url}

\newcommand{\pre}{\operatorname{pre}}
\newcommand{\Comp}{\operatorname{Comp}}
\newcommand{\Oracle}{\operatorname{Oracle}}
\newcommand{\depth}{\operatorname{depth}}
\newcommand{\RMQ}{\operatorname{RMQ}}
\newcommand{\lowerDist}{\operatorname{lowerDist}}
\newcommand{\lowerParent}{\operatorname{lowerParent}}

\begin{document}

\title{Fast Leaf-to-Ancestor Minimum Query in the\\Oracle Model}

\author{Aleksey Upirvitskiy$^\dagger$, Aleksandr Levin$^*$}

\authorrunning{A. Upirvitskiy, A. Levin}

\institute{$^\dagger$ Independent researcher, aleks5d@mail.ru\\$^*$HSE University, am.levin@hse.ru}

\maketitle

\begin{abstract}
We study leaf-to-ancestor path-minimum queries on a rooted,
weighted tree in the oracle model, where the only allowed value operation
is a comparison oracle on edge (or node) weights. We give a static data
structure that, after $O(n\log h)$ preprocessing time, space, and oracle
calls (where $n$ is the number of nodes and $h$ is the tree height), answers
any leaf-to-ancestor query in $O(1)$ worst-case time with zero oracle calls
at query time. The method combines (I) an edge-to-node weight conversion
with a deterministic tie-break to obtain a total order; (II) ladder
(longest-path) decomposition; (III) binary lifting; and (IV) sparse-table
RMQ built over ladder arrays, storing indices selected via the oracle
during preprocessing. We also show that the preprocessing oracle-comparison
bound is tight in the deterministic comparison model.

\keywords{Leaf-to-Ancestor Minimum Query, Oracle Model, Tree Data Structures, Ladder Decomposition, Range Minimum Query}
\end{abstract}

\section{Introduction}

The path minimum query is a classic theoretical computer science problem of
finding the smallest value along the simple path between two given nodes of a
tree. One of the variants is a leaf-to-ancestor minimum query in the oracle model
which is used by Tianqi Yang in \cite{yang2021} to verify the randomized minimum spanning
Boruvka trees.

The original paper shows that if there exists a solution to this problem with
$O(n\log h)$ preprocessing time and $O(1)$ query time without additional
comparisons, then the number of comparisons in the verification process can be
decreased.

In this paper we discuss such a solution and show its optimality.

\subsection{Path minimum query with Oracle}

Firstly, let us properly formulate the problem. Given a rooted tree with $n$
nodes and height $h$, where each edge has a certain hidden value, and an Oracle
function which implements the comparison between two edges, the goal of the
algorithm is to answer queries in the form of ''find the minimum element on the
path from a given leaf to its ancestor''. Throughout the paper, $\depth(v)$ is
the number of edges on the path from the root to $v$, so the root has depth $0$
and the height of the tree is the maximum depth of a vertex,
$\max_{v\in V(T)} \depth(v)$.

Oracle is a black-box function that takes two edge identifiers as input and
returns true if and only if the value associated with the first edge is smaller than
the value associated with the second edge.

The algorithm asymptotics depends on the triplet of the preprocessing time,
the number of oracle queries during querying, and the query time, which is additional
time in the word RAM model (meaning that random-access machine does arithmetic and bitwise operations on a word of $w$ bits).

\subsection{Bor\r{u}vka trees and query structures}

Given a weighted, connected graph, a minimum spanning tree is a spanning tree
with minimum total edge weight. Minimum spanning tree verification reduces to
path queries on a candidate spanning tree: for each non-tree edge, its weight is
compared with the maximum edge weight on the tree path between its endpoints.
King~\cite{king1997} constructs a Bor\r{u}vka tree for this verification problem
and shows that the relevant path maximum is preserved on the corresponding path
between leaves of the auxiliary tree.

Yang~\cite{yang2021} uses the analogous Bor\r{u}vka-tree setting for tree-path
minimum queries and proposes a query structure for this problem. The structure
is designed for repeated online queries after preprocessing: path-minimum
information is stored on the auxiliary tree so that a query can be answered
without scanning the queried path.

Yang's algorithm has four steps:
(1) construct a balanced Bor\r{u}vka tree from the
input graph so that all leaves are at the same depth; (2) for subtrees of small
depth, compute path minima using Cartesian trees; (3) for all other parts, split
the depth into layers whose width is defined by the inverse Ackermann function
on the $k$-th row,
\[
\lambda_k(n)=\min\{\,j\ge 1 : A(k,j)\ge n\,\},
\]
and precompute minima as correspondences \{vertex $\to$ its predecessor on the
layer boundary\}; (4) repeat this step with thresholds that depend on the current
stage of the algorithm.

Yang's construction achieves two outcomes: either \[
\{O(n\log\lambda_k(n)),\,2k,\,O(k+\log\lambda_k(n))\},
\]
or
\[
\{O(n\log\lambda_k(n)),\,2k+2,\,O(k)\}.
\] This paper's result combines the comparison bound of
the first variant with the query time of the second one, with the same
asymptotic preprocessing bound.

\section{Preliminaries}

\subsection{Ladder decomposition}

A ladder decomposition is a tree preprocessing technique used to accelerate
queries on ancestor-directed paths and is closely related to level-ancestor
structures \cite{bender2004}. A path between two arbitrary vertices is not
necessarily ancestor-directed: it can be split at their lowest common ancestor
into two upward paths, one from each endpoint to this ancestor. For the
path-minimum queries considered here, the answer on the whole path is obtained
by combining the minima on these two parts.

The construction consists of two parts. First, the vertices are partitioned
into disjoint base paths by a longest-path decomposition. Then each base path is
extended upward by adding ancestors of its highest vertex, until either its length
is doubled or the root is reached. Thus base paths form a partition of the tree,
whereas the resulting ladders may overlap in the added ancestor vertices. This
extension gives, for every vertex of a base path, a contiguous array of ancestors
that can be used for fast upward jumps.

\subsubsection{Building the longest-path decomposition}\hfill\\
The longest-path decomposition
breaks the tree into disjoint longest-path layers. It can be built in linear time as
follows:

\begin{enumerate}
\item For each node, compute the maximum distance from that node down to a leaf:
it is $0$ for a leaf, and $1+\max\{\mathrm{down}(\mathrm{child})\}$ otherwise.
\item For every non-leaf node, choose one child that attains the maximum
$\mathrm{down}$ value,
and keep only the edge to that child.
\item After removing all other child edges, the tree splits into a forest of paths.
\end{enumerate}

These paths form the longest-path decomposition. In the data structures below
we store every path from its deepest node toward the root. Thus a path can be
represented by the deepest node and by the number of vertices in the path:
\[
(v,\ell)\longrightarrow [\,v,\pre_1(v),\pre_2(v),\ldots,\pre_{\ell-1}(v)\,],
\]
where $\pre_i(v)$ denotes the ancestor of $v$ at distance $i$ ($\pre_1(v)$ is the parent of $v$,
$\pre_2(v)$ is the parent of $\pre_1(v)$, etc.).

\subsubsection{Building the ladder decomposition}\hfill\\
Building the ladder decomposition can
be done in linear time by appending some nodes to every path of the longest
path decomposition. It is equivalent to changing every pair of nodes by the rule:
\[
(v,\ell)\to (v,\min(2\ell,\depth(v)+1)).
\]
The term $\depth(v)+1$ is the number of vertices on the path from the root to
$v$, so it is the largest possible length of a ladder ending at $v$.
The ladder obtained from a longest-path component is called the base ladder of
the vertices of that component. For every vertex we store a pointer to its base
ladder; for each base ladder we store its deepest vertex.

This algorithm works in $O(n)$ time, because longest-path decomposition can
be easily built in such time, and total length of paths is $n$, so doubling it works
in $O(n)$ as well.

\begin{lemma}
Let $x$ be a node with a descendant at distance $L$. Then, for every ancestor
$y$ of $x$ with $\depth(x)-\depth(y)\le L$, the base ladder of $x$ contains the
whole segment $[y,x]$.
\end{lemma}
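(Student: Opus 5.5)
Let $P$ be the longest-path component containing $x$, with deepest vertex $v$ and top vertex $t$, so that $P=[v,\pre_1(v),\ldots,\pre_{\ell-1}(v)]$ with $t=\pre_{\ell-1}(v)$. By construction the base ladder of $x$ is $(v,\min(2\ell,\depth(v)+1))$. It therefore consists of all ancestors of $v$ (including $v$) whose depth is at least $\depth(v)-\min(2\ell,\depth(v)+1)+1$. Since $x$ lies on this ancestor chain of $v$, the segment $[y,x]$ is contained in the ladder exactly when $y$ is not above the ladder's top. I will split according to which term attains the minimum.

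\textbf{Case $\min = \depth(v)+1$.} The ladder is the whole root-to-$v$ path. Every ancestor $y$ of $x$ is also an ancestor of $v$, so it lies on the ladder and we are done.

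\textbf{Case $\min = 2\ell$.} The ladder's top has depth $\depth(v)-2\ell+1 = \depth(t)-\ell$. It suffices to show that $\depth(y)\ge \depth(t)-\ell$. Write $d=\depth(x)-\depth(t)\ge 0$ for the offset of $x$ within $P$, so that $\depth(v)-\depth(x)=\ell-1-d$. The key step uses the longest-path property. Because the path from $x$ down to $v$ always follows a child attaining the maximum $\mathrm{down}$ value, we have $\mathrm{down}(x)=\depth(v)-\depth(x)=\ell-1-d$. Since $x$ has a descendant at distance $L$, it follows that $L\le \mathrm{down}(x)=\ell-1-d$. Then
\[
\depth(y)\ge \depth(x)-L \ge \depth(t)+d-(\ell-1-d) = \depth(t)-\ell+1+2d \ge \depth(t)-\ell+1,
\]
which is strictly more than required.

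The only real obstacle is justifying $\mathrm{down}(x)=\depth(v)-\depth(x)$, that is, that the component below $x$ realizes the maximum downward distance. I will prove it by induction along the path. By the definition of $\mathrm{down}$ and the choice of child in step 2, each step down the path decreases $\mathrm{down}$ by exactly one. The path ends at $v$, which has no kept child and is therefore a leaf with $\mathrm{down}(v)=0$. The remaining work is bookkeeping: converting between vertex counts and edge lengths, which is where the off-by-one care is needed.
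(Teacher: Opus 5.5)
Your proof is correct and follows the same idea as the paper. The longest-path component through $x$ extends at least $L$ vertices below $x$, because $\mathrm{down}(x)$ is realized along it, so doubling its length reaches at least $L$ above $x$ unless the root is reached first; your case split on $\min(2\ell,\depth(v)+1)$ and the induction showing $\mathrm{down}(x)=\depth(v)-\depth(x)$ simply spell out what the paper compresses into the bound $(S-1)+L\le 2S-2$ and the phrase ``unless the root is reached earlier.''
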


\begin{proof}
Since $x$ has a descendant at distance $L$, the longest downward path from $x$
has length at least $L$. Therefore the path of the longest-path decomposition
that contains $x$ has at least $L$ vertices strictly below $x$. Let this
decomposition path have $S$ vertices and deepest vertex $z$. The distance from
$z$ to $y$ is at most $(S-1)+L\le 2S-2$. After the path is doubled, unless the
root is reached earlier, the base ladder contains all vertices up to this
distance from $z$. Hence $y$ belongs to the base ladder of $x$, and the whole
segment $[y,x]$ lies in this ladder.
\end{proof}

\subsection{Range Minimum Query}

Suppose you have an array $A[0\, ..\, n-1]$ and you want to know the smallest value
in a subarray $A[L\, ..\, R)$. A straightforward scan takes $O(n)$ per query, which is
too slow if there are many queries. A common idea is to build a sparse table. We
spend $O(n\log n)$ time and memory up front, after which every query is answered
in $O(1)$.

Define
\[
st[k][i]=\min_{j=0}^{2^k-1} A[i+j],
\]
the minimum in the block of length $2^k$ starting at $i$. We fill the table bottom-up:
\[
st[0][i]=A[i],\qquad
st[k][i]=\min\bigl(st[k-1][i],\ st[k-1][i+2^{k-1}]\bigr).
\]
To answer a query on the half-open interval $[L,R)$ we choose
\[
k=\left\lfloor \log_2(R-L)\right\rfloor,
\]
grab two overlapping blocks of size $2^k$ that cover the range, and return
\[
\RMQ(L,R)=\min\bigl(st[k][L],\ st[k][R-2^k]\bigr).
\]

\subsection{Binary Lifting}

Binary lifting is a handy trick for answering lowest-common-ancestor queries over
the tree with height $h$ and number of nodes $n$ in $O(\log h)$ time after $O(n\log h)$
preprocessing. Like a sparse table, it stores ``jumps'' that are powers of two. For
every vertex $v$ and every power $k$ we precompute its $2^k$-th ancestor:
\[
\mathrm{up}[0][v]=\mathrm{parent}(v),\qquad
\mathrm{up}[k][v]=\mathrm{up}[k-1][\,\mathrm{up}[k-1][v] \,].
\]
Thus
\[
\pre_{2^k}(v)=\mathrm{up}[k][v]
\]
is built from smaller jumps, step by step. During an LCA query we lift both
vertices upward, always subtracting the largest possible power of two, until they
stand at the same depth and then converge.

Additionally, we store the minimum on the same half-open ancestor blocks. Let
$\mathrm{minBlock}[k][v]$ denote the minimum on
$[v,\pre_{2^k}(v))$. Then
\[
\mathrm{minBlock}[0][v]=v,
\]
and
\[
\mathrm{minBlock}[k][v]
=\min\bigl(\mathrm{minBlock}[k-1][v],\
\mathrm{minBlock}[k-1][\mathrm{up}[k-1][v]]\bigr),
\]
where the minimum is selected by $\Comp$ during preprocessing.

\subsection{Converting edge-weighted tree to node-weighted}

Given a tree with values at the edges, we first convert it to a node-weighted
form. For every non-root node $x$ we assign to $x$ the value of the edge
$(\mathrm{parent}(x),x)$. The root has no query value; equivalently, it may be
given a sentinel value that is never returned. Therefore an edge-minimum query
from a node $v$ to its ancestor $u$ becomes a node-minimum query on the half-open
path
\[
[v,u)=\{v,\mathrm{parent}(v),\ldots,\text{the child of }u\}.
\]
The endpoint $u$ itself is excluded, because its assigned value belongs to the
edge above $u$.

The oracle gives only a strict comparison of edge values, and equal values may
occur. To make all stored minima unique, we use a deterministic tie-break and
obtain a total order. The root is larger than every non-root node and is never
returned as a minimum; comparisons involving the root use no oracle calls. For
two non-root nodes $v$ and $u$, define
\[
\Comp(v,u) :=
\]
\begin{enumerate}
\item if $\Oracle(v,u)$ is true, return true;
\item if $\Oracle(u,v)$ is true, return false;
\item return $v \prec u$
\end{enumerate}
where $v\prec u$ is any fixed order of node identifiers. Thus nodes with different
values are compared by value, and nodes with the same value are compared by
their identifiers. Each call of $\Comp$ uses at most two oracle calls, so replacing
oracle comparisons by $\Comp$ changes only a constant factor.

\section{Modifications of the basic algorithms}

\subsection{Modified RMQ}

In the original formulation values cannot be compared during a query. We
therefore build the sparse table so that every block stores the index of its
minimum, and all choices between two candidate indices are made during
preprocessing by calls to $\Comp$.

There remains one comparison in the usual sparse-table query: the two
overlapping blocks may give two different candidate positions. To remove this
last comparison, we also store nearest previous lower positions. For every
position $i$ in an RMQ array define
\[
\mathrm{preLowerSet}(i)=\{\, j<i \mid \Comp(A[j],A[i])=\mathrm{true}\,\}
\]
\[
\mathrm{preLower}(i)=
\begin{cases}
\max(\mathrm{preLowerSet}(i)),& \mathrm{preLowerSet}(i)\ne \varnothing;\\
-1& \text{otherwise};
\end{cases}
\]

\begin{lemma}
The array $\mathrm{preLower}$ can be computed in linear time during
preprocessing.
\end{lemma}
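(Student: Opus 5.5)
We will compute $\mathrm{preLower}$ with the standard monotone-stack sweep, using $\Comp$ as the only value operation. The essential fact is that $\Comp$ is a strict total order on the non-root nodes, with the root as the largest element. This follows from the tie-break construction. If the oracle values differ, the order is the order of the values. If they are equal, both oracle calls return false and the fixed identifier order $\prec$ decides. So $\Comp$ is transitive, irreflexive and total, and ``not $\Comp(A[j],A[i])$'' means exactly that $A[i]$ precedes $A[j]$ or they are the same node.

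The sweep goes left to right over the RMQ array and keeps a stack of indices $j_1<j_2<\dots<j_t$. The invariant is that the values $A[j_1],\dots,A[j_t]$ are strictly increasing under $\Comp$. The stack should consist of exactly those $j<i$ for which no $j'$ with $j<j'<i$ has $A[j']$ preceding $A[j]$. At position $i$, we pop while the stack is nonempty and $\Comp(A[j_t],A[i])$ is false. We then set $\mathrm{preLower}(i)=j_t$, or $-1$ if the stack is empty, and push $i$.

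Correctness has two parts. First, a popped index $j$ satisfies $A[i]\preceq A[j]$. For any later $i'>i$, if $A[j]$ precedes $A[i']$, then by transitivity $A[i]$ precedes $A[i']$ as well, and $i>j$. So $j$ can never be the maximal element of $\mathrm{preLowerSet}(i')$, and discarding it is safe. Second, the first index left on top is in $\mathrm{preLowerSet}(i)$ by the stopping condition. Every larger index $j<i$ either was popped at step $i$ or was discarded earlier by some $j'$ with $A[j']\preceq A[j]$. In the second case, by transitivity again, if $A[j]$ preceded $A[i]$ then $A[j']$ would too, and $j'$ would have survived. An induction on $i$ using the invariant makes this precise. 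The step that needs the most care is this transitivity argument, because it depends on $\Comp$ really being a total order despite equal oracle values. Since we already have that from the tie-break, the rest is routine.

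For complexity, each index is pushed once and popped at most once, so the total number of $\Comp$ calls is at most $2m$ for an array of length $m$. Each $\Comp$ costs at most two oracle calls, so there are at most $4m$ oracle calls, and the time is $O(m)$. Summed over all ladder arrays, this is linear in their total length.
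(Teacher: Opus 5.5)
Your proposal is correct and follows the paper's approach: a left-to-right monotone-stack sweep that pops positions not below $A[i]$ under $\Comp$, reads the surviving top as $\mathrm{preLower}(i)$, and gets linear time because each position is pushed once and popped at most once. Your transitivity-based correctness argument, which relies on the tie-break making $\Comp$ a strict total order, and your explicit bound of at most $2m$ calls of $\Comp$ per array of length $m$ supply detail that the paper's shorter proof leaves implicit.
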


\begin{proof}
We scan the array from left to right and maintain a monotone stack of positions.
Before processing a position $i$, we remove stack positions whose elements are
larger than $A[i]$ according to $\Comp$. The remaining top position, if it
exists, is $\mathrm{preLower}(i)$. We then push $i$ onto the stack. Each
position is pushed once and popped at most once, so the total running time is
linear.
\end{proof}

\begin{lemma}
Let $A$ be an array whose elements are node identifiers ordered by $\Comp$.
After $O(|A|\log |A|)$ preprocessing time and oracle calls, an RMQ query on any
half-open interval $[L,R)$ can return the position of the minimum in $O(1)$ time
with no oracle calls.
\end{lemma}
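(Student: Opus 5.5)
The plan is to build the index-storing sparse table over $A$, then remove the single remaining query-time comparison using $\mathrm{preLower}$ from the previous lemma.

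\textbf{Preprocessing.} Let $st[k][i]$ store the \emph{position} of the $\Comp$-minimum of $A[i..i+2^k)$. Set $st[0][i]=i$. For $k\ge 1$, set $st[k][i]$ to whichever of $st[k-1][i]$ and $st[k-1][i+2^{k-1}]$ is smaller under $\Comp$. There are $O(|A|\log|A|)$ entries and each uses one call of $\Comp$, i.e.\ at most two oracle calls. We also compute $\mathrm{preLower}$ in linear time by the previous lemma, and tabulate $\lfloor\log_2 m\rfloor$ for $1\le m\le |A|$ so that $k$ is found in $O(1)$ time. Since $\Comp$ is a total order (ties are broken by $\prec$), every range has a unique minimum position.

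\textbf{Query on $[L,R)$.} Let $k=\lfloor\log_2(R-L)\rfloor$, $a=st[k][L]$ and $b=st[k][R-2^k]$. The true minimum position $m$ of $[L,R)$ is either $a$ or $b$, because the two blocks cover the range. This is the step that needs care: we must decide between $a$ and $b$ without a comparison.

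We first handle $a=b$. If $a\ne b$, the overlap of the two blocks forces $a<R-2^k\le b$; otherwise $a$ would lie in the second block and be its minimum too. We return $b$ if $\mathrm{preLower}(b)<L$, and $a$ otherwise.

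\textbf{Correctness of the decision.} Suppose $b$ is the minimum of $[L,R)$. Then no $j\in[L,b)$ satisfies $\Comp(A[j],A[b])$, so $\mathrm{preLower}(b)<L$. Conversely, suppose $a$ is the minimum. Then $\Comp(A[a],A[b])$ holds by totality and uniqueness, and $L\le a<b$, so $\mathrm{preLower}(b)\ge a\ge L$. The two cases are therefore exactly separated by the test.

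The query consists of a constant number of table lookups and integer comparisons of indices, with no oracle calls.
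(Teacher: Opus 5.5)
Your preprocessing and query are essentially the paper's: an index-storing sparse table filled via $\Comp$, plus $\mathrm{preLower}$ to settle the last choice. Your test $\mathrm{preLower}(b)<L$ differs only cosmetically from the paper's $\mathrm{preLower}(M_2)<M_1$, and both tests are correct.

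However, one step of your argument is false: $a\ne b$ does \emph{not} force $a<R-2^k$. Take values $5,3,1$ at positions $0,1,2$ and $[L,R)=[0,3)$, so $k=1$. Then $a=1$, $b=2$, and $a=R-2^k$. Your justification (``$a$ would lie in the second block and be its minimum too'') fails because the second block can extend past the right end of the first block and contain a smaller element there. Lying in the second block does not make $a$ its minimum.

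Your correctness argument only uses $a<b$, to put $a$ into $\mathrm{preLowerSet}(b)$, and that is true for a different reason. If $b<a$, then $R-2^k\le b<a<L+2^k$, so both lie in the overlap of the two blocks. Each is then the minimum of the overlap, and uniqueness in the total order gives $a=b$. This is the paper's argument.

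Alternatively, you can avoid positional reasoning altogether. Since $b$ is the minimum of $[R-2^k,R)\supseteq[b,R)$, it is the minimum of $[L,R)$ if and only if no $j\in[L,b)$ satisfies $\Comp(A[j],A[b])$, i.e.\ if and only if $\mathrm{preLower}(b)<L$. That proves your test directly, including the case $a=b$. It is also slightly cleaner than the paper's converse direction, which has to locate $\mathrm{preLower}(M_2)$ inside the left block. With either repair your proof is complete.
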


\begin{proof}

The sparse table stores, for every power-of-two block, the position of its
minimum. All choices while building the table are made by calls to $\Comp$.

After preprocessing, the query on $[L,R)$ is answered as follows:
\begin{enumerate}
\item Let $k=\lfloor \log_2(R-L)\rfloor$.
\item Let $M_1$ be the stored minimum position of $[L,L+2^k)$.
\item Let $M_2$ be the stored minimum position of $[R-2^k,R)$.
\item If $M_1=M_2$, return $M_1$.
\item If $\mathrm{preLower}(M_2)<M_1$, return $M_2$.
\item Otherwise return $M_1$.
\end{enumerate}
If the two positions are distinct, then $M_1<M_2$: otherwise both candidates
would lie in the overlap of the two blocks, where the minimum is unique. The
condition $\mathrm{preLower}(M_2)\ge M_1$ is equivalent to saying that $A[M_1]$
is lower than $A[M_2]$. Indeed, if $A[M_1]$ is lower than $A[M_2]$, then $M_1$
itself belongs to $\mathrm{preLowerSet}(M_2)$, so
$\mathrm{preLower}(M_2)\ge M_1$. Conversely, if
$\mathrm{preLower}(M_2)\ge M_1$, let $j=\mathrm{preLower}(M_2)$. The position $j$
cannot lie in the right block $[R-2^k,R)$, because $M_2$ is the minimum of this
block. Hence $j$ lies in the left block, and since $M_1$ is the minimum of the
left block, $A[M_1]$ is lower than $A[j]$ and therefore lower than $A[M_2]$. If
$\mathrm{preLower}(M_2)<M_1$, then $A[M_1]$ is not lower than $A[M_2]$, and the
total order implies that $A[M_2]$ is lower than $A[M_1]$. Thus the query returns
the minimum on $[L,R)$ without any oracle call.
\end{proof}

\subsection{Leaf-to-ancestor path comparison without per-query oracle}

Direct comparison of values along a leaf-to-ancestor path during a query is impossible,
but this difficulty can be resolved by precomputing nearest lower ancestors.

For each non-root node $v$ define
\[
\lowerDist(v)=
\min\bigl(\{\, i \mid 0<i<\depth(v),\
\Comp(\pre_i(v),v)=\mathrm{true}\,\}\cup\{\infty\}\bigr),
\]
and, if $\lowerDist(v)<\infty$,
\[
\lowerParent(v)=\pre_{\lowerDist(v)}(v).
\]
If $\lowerDist(v)=\infty$, then $\lowerParent(v)$ is undefined. The value
$\lowerDist(v)$ is the distance from $v$ to its nearest ancestor that is smaller
than $v$ in the total order.

The array $\lowerDist$ is computed during preprocessing. We use the same
binary-lifting table that stores, for every node $x$ and every power $2^t$, the
ancestor $\pre_{2^t}(x)$ and the minimum on the half-open block
$[x,\pre_{2^t}(x))$. For a fixed node $v$, let $x=\mathrm{parent}(v)$ and let
$d=1$. We scan the powers $t=\lfloor\log h\rfloor,\ldots,0$. If the block
$[x,\pre_{2^t}(x))$ exists and its stored minimum is not smaller than $v$, then
this block contains no smaller ancestor of $v$; we set
$x=\pre_{2^t}(x)$ and $d=d+2^t$. After all powers have been processed, we check
the current node $x$. If $x$ is a non-root node and $\Comp(x,v)=\mathrm{true}$,
then $x$ is the nearest smaller ancestor and $\lowerDist(v)=d$. Otherwise no
smaller ancestor exists and $\lowerDist(v)=\infty$. This takes $O(\log h)$ calls
of $\Comp$ per node, and hence $O(n\log h)$ calls of $\Comp$ in total. Since
each call of $\Comp$ uses at most two oracle calls, this is still $O(n\log h)$
oracle calls.

\section{Main algorithm}

Here we present the steps to construct efficient leaf-to-ancestor path minimum
queries.

\subsection{Offline construction}

\begin{enumerate}
\item Convert edge-weighted task to node-weighted, building Comp function from
Oracle in process.
\item Precompute the table $\lg[i]=\lfloor\log_2 i\rfloor$ for
$1\le i\le h$.
\item Build ladder decomposition and store, for every node, its base ladder.
\item For every base ladder build the modified RMQ structure on its array of
nodes.
\item Build binary lifting with stored minimums for all nodes.
\item Build $\lowerDist$ and $\lowerParent$ for all nodes.
\end{enumerate}

\subsection{Querying}

The query procedure works for any node $v$, and therefore also for the required
case where $v$ is a leaf. A pair of nodes $(v,u)$, where $u$ is an ancestor of
$v$, can be recast as the pair $(v,l)$ with
\[
l=\depth(v)-\depth(u).
\]
If $l=0$, the corresponding edge path is empty and the data structure returns
$\mathrm{EMPTY}$. In the following we consider the non-empty case $l>0$. By the
edge-to-node conversion, the query is the
minimum on the half-open node path $[v,u)$, and $\pre_l(v)=u$.

Let
\[
k=\lg[l].
\]
Thanks to the binary-lifting preprocessing, the answer for the path
\[
\left[v,\pre_{2^k}(v)\right)
\]
has already been stored. Define
\[
p=\pre_{2^k}(v).
\]
Let $a$ be this stored minimum. If $p=u$, then the second segment is empty and
$a$ is the answer.

Assume now that $p\ne u$. The remaining distance from $p$ to $u$ is
$l-2^k<2^k$. Since $p$ has the descendant $v$ at distance $2^k$, the
ladder-decomposition lemma implies that the base ladder of $p$ contains the
whole segment $[p,u]$. Let $z$ be the deepest vertex of this base ladder. Since
the ladder array is stored from $z$ toward the root, the positions of $p$ and
$u$ in this array are
\[
\depth(z)-\depth(p)\quad\text{and}\quad \depth(z)-\depth(u),
\]
respectively. These values are computed in $O(1)$ time from the stored pointer
to the base ladder of $p$. The modified RMQ structure on this ladder returns,
without oracle calls, the minimum $b$ on the segment $[p,u)$.

It remains to choose the smaller of $a$ and $b$ without a new comparison. The
node $b$ is an ancestor of $a$. Put
\[
d=\depth(a)-\depth(b).
\]
If $\lowerDist(a)>d$, then no ancestor of $a$ up to and including $b$ is smaller
than $a$, so $a$ is smaller than $b$ and we return $a$. If
$\lowerDist(a)\le d$, let $c=\lowerParent(a)$. The node $c$ lies between $a$ and
$b$. It cannot lie in $[a,p)$, because $a$ is the minimum on $[v,p)$. Hence
$c\in[p,b]$. Since $b$ is the minimum on $[p,u)$ and $c\in[p,u)$, $b$ is no larger
than $c$ in the total order. Thus $b$ is smaller than $a$, and we return $b$.

\section{Analysis}

\subsection{Precomputation}

\begin{itemize}
\item Steps 1--3 run in $O(n)$ time.
\item Step 4 runs in $O(n\log h)$ time because the total size of all ladder
arrays is $O(n)$, and every ladder has length at most $h+1$. The modified RMQ
tables and the arrays $\mathrm{preLower}$ are built during this step.
\item Step 5 requires $O(n\log h)$ time and stores, for every power of two,
both the ancestor and the minimum on the corresponding half-open ancestor block.
\item Step 6 computes $\lowerDist$ for every node by the binary-lifting search
described above, and therefore also requires $O(n\log h)$ time.
\end{itemize}

Every comparison used in these steps is a call of $\Comp$, and every call of
$\Comp$ uses at most two oracle calls. Consequently, the overall preprocessing
time, space, and number of oracle calls are $O(n\log h)$ for $h\ge 2$.

\subsection{Querying}

Each query is answered in $O(1)$ time thanks to the preprocessing, and no oracle
calls are needed. Thus the total per-query complexity is $(0,O(1))$, as required.

\section{Optimality}

We now show that the number of oracle comparisons used in preprocessing is
asymptotically optimal.

Assume that a deterministic data structure answers leaf-to-ancestor minimum
queries with no oracle calls during queries. After preprocessing, such a data
structure can be used to recover the nearest smaller ancestor of every leaf
without additional oracle calls. Indeed, fix a leaf $x$. Because queries are
half-open, an ancestor
$a$ of $x$ is included in the query exactly when the endpoint is the parent of
$a$ or a higher ancestor. Starting with the endpoint $\mathrm{parent}(x)$ and
moving it upward, the first endpoint for which the returned minimum is no longer
$x$ is $\mathrm{parent}(a)$, where $a$ is the nearest ancestor whose value is
smaller than the value of $x$. If no such endpoint exists, then $x$ has no
smaller ancestor.

We first define a subtree used in the lower-bound construction. Add a dummy
root $r$, and below it take a path
\[
b_1,b_2,\ldots,b_X .
\]
The dummy root has no value and is excluded from all half-open queries; the node
$b_1$ represents the incoming edge $(r,b_1)$ and is included in queries whose
endpoint is $r$.
The values of the incoming edges of these nodes are fixed and strictly increase
with depth:
\[
w(b_1)<w(b_2)<\cdots<w(b_X).
\]
For every $i\in\{1,\ldots,X\}$, attach a leaf $\ell_i$ to $b_i$. The value of
the incoming edge of $\ell_i$ is chosen to lie in one of the $i$ intervals
\[
(-\infty,w(b_1)),\quad
(w(b_1),w(b_2)),\quad \ldots,\quad
(w(b_{i-1}),w(b_i)).
\]
Consequently, the nearest smaller ancestor of $\ell_i$ can be chosen in exactly
$i$ different ways: either it does not exist, or it is one of
$b_1,\ldots,b_{i-1}$. Hence this subtree admits
\[
\prod_{i=1}^{X} i=X!
\]
different nearest-smaller answer vectors. These vectors are pairwise
distinguishable by leaf-to-ancestor minimum queries: if the nearest smaller
ancestor of $\ell_i$ is $b_j$, then the query with endpoint
$\mathrm{parent}(b_j)$ is the first half-open query on this root path whose
answer is not $\ell_i$. Equivalently, the sequence of answers obtained by
moving the endpoint from $\mathrm{parent}(\ell_i)$ toward the root determines
this nearest smaller ancestor. In particular, if two choices give nearest
smaller ancestors $b_s$ and $b_t$ with $s<t$, then the query with endpoint
$\mathrm{parent}(b_t)$ returns $\ell_i$ in the first choice and $b_t$ in the
second one. The same query distinguishes the case of no smaller ancestor from
the case where the nearest smaller ancestor is $b_t$.

For $2\le h<8$, a linear lower bound is obtained by independent height-two
subtrees under a common dummy root $r$. For each subtree take a path
$r,b,\ell$. The two possible orders $w(\ell)<w(b)$ and $w(b)<w(\ell)$ are
distinguished by the query from $\ell$ to $r$. With $\Theta(n)$ independent
subtrees this gives $2^{\Theta(n)}$ answer vectors, and hence
$\Omega(n)=\Omega(n\log h)$ comparisons in this range.

It remains to consider the case $8\le h\le n$. Set
\[
X=\left\lfloor \min\{h/2,n/4\}\right\rfloor,\qquad
q=\left\lfloor \frac{n-1}{2X}\right\rfloor .
\]
We take $q$ independent copies of the subtree described above under the same
dummy root. Each copy uses $2X$ non-root nodes, so the whole construction has
at most $1+2qX\le n$ nodes. Its height is at most $X+1\le h$. Moreover,
$qX=\Theta(n)$: the upper bound $qX\le n/2$ is immediate, and the lower bound
follows from
\[
qX\ge \frac{n-1}{2}-X\ge \frac{n}{4}-\frac{1}{2},
\]
because $X\le n/4$. Also $\log X=\Theta(\log h)$: if $h\le n/2$, then
$X=\Theta(h)$, while if $h>n/2$, then $h=\Theta(n)$ and $X=\Theta(n)$.

The value ranges of different subtrees are chosen disjoint, so the choices
inside the subtrees are independent. If an exact node count is required, the
remaining nodes may be added as leaves adjacent to the dummy root, with fixed
very large values; they do not change the number of answer vectors. The
resulting tree has at most $n$ nodes and height at most $h$, and the number of
possible nearest-smaller answer vectors is at least
\[
(X!)^q .
\]

Each oracle comparison has two possible outcomes, so $K$ oracle comparisons can
distinguish at most $2^K$ answer vectors. Therefore any correct preprocessing
must satisfy
\[
2^K\ge (X!)^q,
\]
and hence
\[
K\ge q\log_2(X!).
\]
Furthermore,
\[
\log_2(X!)
=\sum_{i=1}^{X}\log_2 i
\ge \sum_{i=\lceil X/2\rceil}^{X}\log_2(X/2)
=\Omega(X\log X).
\]
With the above choice of $X$ and $q$, $qX=\Theta(n)$ and
$\log X=\Theta(\log h)$. Therefore
\[
K=\Omega(qX\log X)=\Omega(n\log h).
\]

Thus any deterministic oracle-model data structure for leaf-to-ancestor minimum
queries with zero oracle calls at query time requires $\Omega(n\log h)$ oracle
comparisons during preprocessing in the worst case over trees with at most $n$
nodes and height at most $h$, where $2\le h\le n$. The data structure
constructed in the previous sections uses $O(n\log h)$ oracle calls, so the
comparison bound is tight.

\section{Acknowledgments}

We thank Pavel Sokolov and Dmitry Sluch for valuable comments.\\ Support from the Basic Research Program of HSE University is gratefully acknowledged (HSE-BR-2025-023).


\begin{thebibliography}{5}

\bibitem{yang2021}
Yang, T.: Tree Path Minimum Query Oracle via Bor\r{u}vka Trees. arXiv:2105.01864
[cs.DS] (2021). \url{https://arxiv.org/abs/2105.01864}

\bibitem{king1997}
King, V.: A Simpler Minimum Spanning Tree Verification Algorithm. Algorithmica
18, 263--270 (1997). \url{https://doi.org/10.1007/BF02526037}

\bibitem{bender2004}
Bender, M.A., Farach-Colton, M.: The Level Ancestor Problem
simplified. Theoretical Computer Science. 321(1), 5--12 (2004).
\url{https://doi.org/10.1016/j.tcs.2003.05.002}


\end{thebibliography}
\end{document}